\DeclareFontFamily{OT1}{pzc}{}
\DeclareFontShape{OT1}{pzc}{m}{it}{<-> s * [1.10] pzcmi7t}{}
\DeclareMathAlphabet{\mathpzc}{OT1}{pzc}{m}{it}
\newcommand{\spc}{\quad \quad \quad}
\newcommand{\E}{{\mathbb{E}}}
\newcommand{\Sb}{{\mathbb{S}}}
\newcommand{\Pb}{{\mathbb{P}}}
\newcommand{\s}{{\sigma \!\! \! \sigma}}
\def\be{\begin{equation}}
\def\ee{\end{equation}}
\def\beq{\begin{eqnarray}}
\def\eeq{\end{eqnarray}}
\theoremstyle{definition}
\newtheorem{definition}{Definition}
\theoremstyle{theorem}
\newtheorem{theorem}{Theorem}
\theoremstyle{corollary}
\newtheorem{corollary}{Corollary}
\begin{document}
\title{How Lorentz boosts reshape relaxation spectra}
\author{L.~Gavassino}
\affiliation{Department of Applied Mathematics and Theoretical Physics, University of Cambridge, Wilberforce Road, Cambridge CB3 0WA, United Kingdom}

\begin{abstract}
In relativity, relaxation processes are often assumed to undergo time dilation under Lorentz boosts. We show that this intuition fails generically. Due to relativity of simultaneity, Lorentz boosts can split a single relaxation mode into a continuum of excitations, with a width set by the maximal signal propagation speed. Focusing on linearized relativistic (kinetic or rheological) theories with an Onsager-type symmetry, we derive rigorous bounds on relaxation spectra in arbitrary inertial frames, expressed solely in terms of rest-frame spectral data at zero wavenumber. As a consequence, non-hydrodynamic gaps, maximal relaxation rates, and the convergence radii of hydrodynamic modes obey nontrivial Lorentz-covariant constraints. These results provide a unified framework for understanding how relativity constrains relaxation dynamics in many-body systems.
\end{abstract} 
\maketitle

{\it \noindent \textbf{Introduction --}} Much of what we know about fluids follows from the study of their mode structure. Starting from a homogeneous equilibrium state, one introduces a small perturbation, and expands it in Fourier modes $e^{ikx - i\omega t}$ (with $k\in \mathbb{R}$). The resulting complex frequencies $\omega(k)$ determine how wavepackets propagate and decay, and thereby characterize the fluid’s response properties. Consequently, considerable effort has been devoted to rigorously analyzing and classifying mode spectra in hydrodynamics, kinetic theory, and holography \cite{KovtunHolography2005,Pu2010,Denicol_Relaxation_2011,Heller2014,RomatschkeCutsandPoles:2015gic,Grozdanov:2016vgg,Kurkela:2017xis,Moore:2018mma,Grozdanov:2019kge,Kovtun2019,Dash:2021ibx,GavassinoNonHydro2022,HellerHydrohedron2023jtd,Brants:2024wrx,RochaBranchcut:2024cge,GavassinoGapless:2024rck,Hu:2024tnn,Bajec:2024jez,Gavassino:2025tul,Bajec:2025dqm}.

Such analyses are usually carried out in the reference frame where the equilibrium state is at rest. In Newtonian physics, this entails no loss of generality, since the dispersion relations $\omega(k)$ and $\omega'(k')$ measured by two observers $\mathcal{O}$ and $\mathcal{O}'$ moving at relative velocity $v$ satisfy
\vspace{-0.3cm}
\begin{equation}
\begin{cases}
\omega'=\omega+vk \, ,\\
k'=k \, , 
\end{cases}
\qquad (\text{Galilei boost}).
\end{equation}
Since $k$ is invariant (and real, for Fourier modes), phases and group velocities simply shift by $v$, while the damping rate is untouched ($\mathfrak{Im}\,\omega'=\mathfrak{Im}\,\omega$).

In relativity, the picture changes qualitatively \cite{Hiscock_Insatibility_first_order,GavassinoSuperlum2021,Bhattacharyya:2025hjs,Hoult:2025htt}. Relativity of simultaneity causes $k'$ to depend on $\omega$:
\vspace{-0.2cm}
\begin{equation}
\begin{cases}
\omega'=\gamma(\omega+vk) \, ,\\[2pt]
k'=\gamma(k+v\omega) \, , 
\end{cases}
\qquad (\text{Lorentz boost}),
\end{equation}
with $\gamma\,{=}\,(1{-}v^2)^{-1/2}$. As a result, the mapping between $\omega(k)$ and $\omega'(k')$ is now implicit, and a mode that appears as a Fourier excitation to $\mathcal{O}$ (i.e. $k\in\mathbb{R}$) no longer corresponds to a Fourier mode for $\mathcal{O}'$ (i.e. $k'\notin\mathbb{R}$) \cite{GavassinoSuperlum2021}. This can change most qualitative features of the spectrum, even for the simplest systems. Consider, for instance, the diffusive mode $\omega=-ik^{2}$: a single-valued, stable dispersion relation (i.e. $\mathfrak{Im}\,\omega\le0$ for $k\in\mathbb{R}$~\cite{Hiscock_Insatibility_first_order,Kovtun2019}), gapless, and analytic with infinite radius of convergence around $k=0$. After a boost with, say, $v=1/2$, one instead finds two branches, $\omega'_{\pm}=2k'+\sqrt{3}\,i\bigl(1\pm\sqrt{1-\sqrt{3}\,i\,k'}\bigr)$, one of which is unstable and gapped, and both of which possess only a finite radius of convergence about $k'=0$.

Recently, the appearance of the unstable branch $\omega_+'$ was shown to be connected to the ability of the diffusion equation $\omega{=}{-}ik^{2}$ to propagate information superluminally \cite{GavassinoSuperlum2021}. It was further established that, if the underlying theory is causal, ``stability'' (or the lack thereof) is promoted to a Lorentz-invariant property \cite{GavassinoBounds2023myj}. This prompts a broader question: Are there additional, physically motivated assumptions (besides causality) that ensure Lorentz invariance of other spectral features, like the existence of gaps or the analyticity properties of $\omega(k)$?

In this Letter, we demonstrate that, when the equations governing relaxation to equilibrium obey a certain ``Onsager-type symmetry'' (which is a built-in feature of kinetic theory and rheology), a wide variety of spectral properties become Lorentz invariant. In particular, one can place bounds on both magnitudes and convergence radii of the frequencies measured by a moving observer solely in terms of rest-frame spectral features at $k=0$.

{\it \noindent \textbf{Conventions --}} We adopt the metric signature $(-,+,+,+)$, and work in natural units, $c=\hbar=k_B=1$.

{\it \noindent \textbf{The symmetry assumption --}}
Consider a homogeneous equilibrium state, and let $\Psi(x^\mu)\,{\in}\, \mathbb{C}^D$ (with $D\,{\in}\, \mathbb{N}$) be a list of linearized fields describing small perturbations about it. These may represent, for example, fluctuations of the moments of the kinetic distribution function\footnote{\label{footona1}For clarity, we work at finite $D$, which in kinetic theory amounts to truncating the moment expansion at some very large number (e.g. $10^{100}$ moments). This is physically acceptable: the distribution function is already a coarse-grained object, and kinetic theory ceases to hold at too fine momentum resolution \cite[\S 3.1]{huang_book}.}, or a set of transient hydrodynamic variables. We assume that their linear evolution is governed by a first-order system of coupled partial differential equations,
\vspace{-0.1cm}
\begin{equation}\label{EoM}
\E^\mu \partial_\mu \Psi = -\s \Psi \, ,
\end{equation}
where $\E^\mu$ and $\s$ are constant background matrices. Our key structural assumption is that all these matrices are \textit{Hermitian}: $\E^\mu\,{=}\,(\E^\mu)^\dagger$ and $\s\,{=}\,\s^\dagger$. Furthermore, we require that $\s$ be non-negative definite, and $\E^{0}$ be positive definite in \emph{every} inertial frame. It follows that the quadratic forms $\Psi^\dagger \E^\mu \Psi$ constitute a future-directed timelike (or null) vector, which implies causality \cite{GerochLindblom1990,GavassinoCausality2021} (if $\E^\mu$ were non-Hermitian, one would need to compute the characteristics explicitly, to determine causality).


The above assumptions seem rather restrictive, but are satisfied by a very broad class of matter models \cite{Geroch_Lindblom_1991_causal,GavassinoNonHydro2022,GavassinoSymmetric2022nff,GavassinoUniveraalityI2023odx}, including relativistic kinetic theory \cite{RochaGavassinoFlucut:2024afv,GavassinoGapless:2024rck}, radiation transport \cite{GavassinoRadiazione,GavassinoRadiativeBounds:2025bxx}, Israel–Stewart theory \cite{Israel_Stewart_1979,Hishcock1983,OlsonLifsh1990}, divergence-type theories \cite{Liu1986,GerochLindblom1990}, GENERIC \cite{Stricker2019,GavassinoGENERIC:2022isg}, Carter’s multifluid formulation \cite{carter1991,PriouCOMPAR1991,GavassinoStabilityCarter2022}, and theories of (super)solidity \cite{GavassinoUniveraalityI2023odx,GavassinoUniversalityII:2023qwl}. Indeed, the constraints on the structure of $\E^\mu$ and $\s$ can be traced back to Onsager's theory \cite{Onsager_1931,Onsager_Casimir,Krommes1993}: $\E^0$ and $\s$ are Hessians of the free-energy density and the entropy production rate respectively (and so are positive definite in all frames), while Hermiticity arises as a reciprocal relation related to $PT$ symmetry \cite{GavassinoSymmetric2022nff,GavassinoGapless:2024rck} (see Supplementary Material for more details).

We stress that, in the present work, we are mostly interested in models whose non-hydrodynamic sector is physical. Hence, frameworks like Israel-Stewart should be regarded here not just as causal extensions of hydrodynamics \cite{Geroch1995}, but as mesoscopic ``quasi-hydrodynamic'' \cite{Denicol_Relaxation_2011,Grozdanov2019,StephanovHydroPlus:2017ghc} models of non-Newtonian fluids \cite{landau7,Andrade:2019zey,BAGGIOLI20201,GavassinoBurgers:2023eoz,GavassinoFarFromBulk:2023xkt,AhnBaggioli:2025odk}.


{\it \noindent \textbf{Some basic invariant statements --}} 
To warm up, we derive a few elementary Lorentz-invariant features of the excitation spectrum that follow immediately from our assumptions. To this end, consider perturbations of the form $\Psi\,{\propto} \,e^{ik_\mu x^\mu}$, with complex $k^\mu$. Then, equation \eqref{EoM} reduces to
\vspace{-0.2cm}
\begin{equation}\label{EomFour}
k_\mu \E^\mu \Psi =i\s \Psi\, .
\end{equation}
Multiplying both sides by $\Psi^\dagger$, we obtain
\begin{equation}\label{omegone}
k_\mu \Psi^\dagger \E^\mu \Psi =i\Psi^\dagger \s \Psi\,  \, ,
\end{equation}
Taking the real and imaginary parts of \eqref{omegone}, and noting that $\Psi^\dagger \E^\mu \Psi$ and $\Psi^\dagger \s \Psi$ are real, we obtain
\vspace{-0.1cm}
\begin{equation}\label{ReandIm}
\begin{split}
(\Psi^\dagger \E^\mu \Psi) \mathfrak{Re} \, k_\mu ={}& 0 \, , \\
(\Psi^\dagger \E^\mu \Psi) \mathfrak{Im} \, k_\mu ={}& \Psi^\dagger \s \Psi \geq 0 \, .
\end{split}
\end{equation}
Since $\Psi^\dagger \E^\mu \Psi$ is timelike future-directed, it follows that 
$\mathfrak{Re}\, k^\mu$ is spacelike, and $\mathfrak{Im}\, k^\mu$ lies 
outside the future light cone \cite[\S 1.4.2]{special_in_gen} (it is either spacelike or past directed). 
To see what this implies in practice, take $k^\mu=(\omega,k,0,0)$. 
The real-part equation then gives 
$|\mathfrak{Re}\,\omega|\leq |\mathfrak{Re}\,k|$, 
which means that the phase velocity $\mathfrak{Re}\,\omega/\mathfrak{Re}k$ 
never exceeds the speed of light in any of these models (although the group velocity can). 
This behavior is well known from both hydrodynamics \cite{Pu2010} and kinetic theory 
\cite[\S 9.3.1]{cercignani_book}\footnote{In kinetic theory, one usually finds that, for each $k>0$, there is a 
continuum of ``improper'' solutions of \eqref{EomFour}, with 
$|\mathfrak{Re}\omega|$ extending up to $k$, but never beyond 
\cite{RomatschkeCutsandPoles:2015gic}.}. 
We also recover the standard result that $\omega\in i\mathbb{R}$ at $k=0$, 
a universal feature of kinetic and transient hydrodynamic theories \cite{GavassinoNonHydro2022}. Due to the manifest covariance of \eqref{ReandIm}, we see that these statements hold in all reference frames.

The imaginary-part equation instead yields $\mathfrak{Im}\,\omega \,{\leq}\, |\mathfrak{Im}\,k|$, 
which expresses the fact that these models are stable in all inertial frames 
\cite{GavassinoBounds2023myj,GavassinoDisperisons2023mad} and that their 
dispersion relations respect microcausality \cite{HellerBounds2022ejw}.

\newpage
{\it \noindent \textbf{Conservation laws --}} Before presenting our main result, we must introduce a few relevant concepts.

Let $\Phi_I$ ($I\,{=}\,1,2,...,N$) be a basis of $\ker(\s)$. Since $\s\,{=}\,\s^\dagger$ and $\s \Phi_I \,{=}\, 0$, it follows that 
$\Phi_I^\dagger \s\, {=}\, (\s \Phi_I)^\dagger \,{=}\, 0$. 
Contracting \eqref{EoM} with $\Phi_I^\dagger$ then yields a set of linearized local 
conservation laws:
\begin{equation}\label{conserviamo}
\partial_\mu J_I^\mu = 0 \, ,
\end{equation}
for the currents $J_I^\mu \,{=}\, \Phi_I^\dagger \E^\mu \Psi$.
This set is complete. 

Indeed, suppose that there exists another conserved current of the form 
$J^\mu = \mathbb{A}^\mu \Psi$, where 
$\mathbb{A}^\mu \in (\mathbb{C}^D)^\dagger$ are constant. 
Introduce $\mathbb{B}^\mu = \mathbb{A}^\mu (\E^0)^{-1}$, 
which is well defined because $\E^0$ is positive definite. 
Then,
\begin{equation}
\partial_\mu J^\mu = 0 
\quad \Rightarrow \quad
\mathbb{B}^\mu \E^0 \partial_\mu \Psi = 0 \, .
\end{equation}
Subtracting $\mathbb{B}^0 \times$\eqref{EoM}, we get
$
(\mathbb{B}^j \E^0 {-} \mathbb{B}^0 \E^j) \partial_j \Psi 
= \mathbb{B}^0 \s \Psi
$.
For this relation to hold for arbitrary initial data $\Psi(0,x^j)$, 
we must require $\mathbb{B}^0 \s \,{=}\, 0$, which implies
$\mathbb{B}^0 \,{=}\, \sum_I \lambda^I \Phi_I^\dagger$ 
for some constants $\lambda^I$, and 
$\mathbb{B}^j \E^0 \,{=}\, \mathbb{B}^0 \E^j$, which 
then gives
\begin{equation}
J^\mu = \mathbb{B}^0 \E^\mu \Psi 
= \sum\nolimits_I \lambda^I J_I^\mu \, .
\end{equation}
Hence, all linearized conserved currents are linear combinations 
of those generated by the basis $\Phi_I$.

{\it \noindent \textbf{Nearby equilibria --}}
The vectors $\Phi_I$ play an additional key role. One immediately verifies that any constant state of the form
\begin{equation}\label{equilibrium}
\Psi(x^\mu)=\sum\nolimits_I c^I \Phi_I , \qquad c^I=\text{const}\in\mathbb{C},
\end{equation}
solves \eqref{EoM}. Such configurations have the property that the associated conserved density
\begin{equation}
\sum\nolimits_I (c^I)^* J_I^0 = \sum\nolimits_I (c^I)^* \Phi_I^\dagger \E^0 \Psi = \Psi^\dagger \E^0 \Psi
\end{equation}
is nonzero for $\Psi\neq 0$. Since $\Psi$ parametrizes linear departures from the background equilibrium, the states in \eqref{equilibrium} may be viewed as nearby equilibria obtained by infinitesimally shifting the conserved densities.

{\it \noindent \textbf{Nonhydrodynamic spectrum --}} Fix an observer $\mathcal{O}$ with four-velocity $u^\mu$, and look for exponential
solutions $\Psi\,{\propto}\, e^{ik_\mu x^\mu}$ that are spatially uniform (i.e. have $k\,{=}\,0$) in the frame of
$\mathcal{O}$, so that $k^\mu=\omega u^\mu$, with $\omega$ the frequency measured
by $\mathcal{O}$. Substituting this ansatz into \eqref{EomFour} yields
\begin{equation}\label{Nonhydromodes}
\omega (-u_\mu \E^\mu)\Psi = -i\s \Psi 
\end{equation}
This can be solved by working on a basis in $\Psi$--space\footnote{For any invertible $D{\times} D$ matrix $\Sb$, we can write $\E^\mu=\Sb^\dagger\Tilde{\E}^\mu\Sb$ and $\s=\Sb^\dagger\Tilde{\s}\Sb$, where $\Tilde{\E}^\mu$ and $\Tilde{\s}$ are Hermitian and inherit the positivity properties of the originals. Plugging these decomposions into \eqref{EoM}, we obtain an equivalent theory $\Tilde{\E}^\mu \partial_\mu \Tilde{\Psi}{=}{-}\Tilde{\s}\Tilde{\Psi}$, with $\Tilde{\Psi}{=}\Sb \Psi$.} such that $-u_\mu \E^\mu = 1$ and $\s=\text{diag}(\sigma_n)$, with $\sigma_n\geq 0$. In this basis, \eqref{Nonhydromodes} reduces to $\s\Psi\,{=}\,i\omega\Psi$. The eigenmodes are therefore the unit vectors $\Psi_n{=}\,(0,...,0,1_{\text{at }n},0,...,0)^T$, with eigenfrequencies $\omega_n=-i\sigma_n$. Among them, $N$ modes have $\omega\,{=}\,0$, and correspond to the equilibria \eqref{equilibrium}. The remaining $D{-}N$ modes have $i\omega_n{=}\sigma_n{>}0$, and define the \textit{nonhydrodynamic spectrum} (at $k=0$) relative to $\mathcal{O}$.


\newpage
{\it \noindent \textbf{Main theorem --}} We can finally state our main result:
\begin{theorem}\label{theo1}
Consider two inertial observers $\mathcal{O}$ and $\mathcal{O}'$ moving at relative speed $v\,{\geq}\,0$. Suppose that the nonhydrodynamic spectrum (at $k\,{=}\,0$) relative to $\mathcal{O}$ is confined within some interval $a\, {\leq}\, i\omega\,{\leq}\, b$ (with $a$ non-negative). Then, the nonhydrodynamic spectrum (at $k'{=}0$) relative to $\mathcal{O}'$ is confined within the interval
\begin{equation}\label{theboundoni}
\boxed{\dfrac{a(1{-}v)}{\gamma}\leq i\omega' \leq \dfrac{b}{\gamma (1{-}v)}  \, .}
\end{equation}
\end{theorem}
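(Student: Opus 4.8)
The plan is to cast both nonhydrodynamic spectra as generalized Hermitian eigenvalue problems that share the matrix $\s$ but differ only in a positive-definite weight matrix, and then to compare their eigenvalues through a Rayleigh-quotient (Courant--Fischer) argument. Working throughout in the frame of $\mathcal{O}$, where $u^\mu=(1,0,0,0)$ and ${u'}^\mu=\gamma(1,v,0,0)$, the zero-wavenumber equation \eqref{Nonhydromodes} becomes $\s\Psi=\lambda A\Psi$ for $\mathcal{O}$ and $\s\Psi=\lambda' A'\Psi$ for $\mathcal{O}'$, with $\lambda=i\omega$, $\lambda'=i\omega'$ and
\begin{equation}
A=-u_\mu\E^\mu=\E^0,\qquad A'=-u'_\mu\E^\mu=\gamma(\E^0-v\E^1).
\end{equation}
Because $\E^0$ is positive definite in every frame, both $A$ and $A'$ are positive definite; each problem then has $D$ real eigenvalues, the frame-independent kernel $\ker\s$ supplying the $N$ vanishing ones (the equilibria \eqref{equilibrium}) and the remaining $D-N$ positive ones forming the two nonhydrodynamic spectra. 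Since $\ker\s$ is the same for both observers, the hydrodynamic and nonhydrodynamic sectors are matched index by index across frames.

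The conceptual heart of the proof is a single operator inequality. For any $\Psi\neq 0$ set $E^\mu=\Psi^\dagger\E^\mu\Psi$; the assumption that this four-vector is future-directed timelike or null gives $E^0>0$ and $|E^1|\leq E^0$, so $t:=E^1/E^0\in[-1,1]$ and
\begin{equation}
\frac{\Psi^\dagger A'\Psi}{\Psi^\dagger A\Psi}=\gamma(1-vt)\in\bigl[\gamma(1-v),\,\gamma(1+v)\bigr].
\end{equation}
Equivalently, $\gamma(1-v)\,A\preceq A'\preceq\gamma(1+v)\,A$ as quadratic forms. This is the step where the physics enters, and I expect it to be the main (if modest) obstacle: one must invoke the \emph{full} future-directed timelike character of $\Psi^\dagger\E^\mu\Psi$, since mere positivity of $\E^0$ only guarantees a positive ratio, not the sharp factors $\gamma(1\pm v)$.

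Given the inequality, the rest is routine linear algebra. For fixed $\Psi$ the two Rayleigh quotients differ precisely by the factor $\Psi^\dagger A\Psi/\Psi^\dagger A'\Psi$, which lies in $[1/(\gamma(1+v)),\,1/(\gamma(1-v))]$; feeding this pointwise bound into the Courant--Fischer min--max characterization of each ordered eigenvalue yields, index by index, $\lambda_j/[\gamma(1+v)]\leq\lambda'_j\leq\lambda_j/[\gamma(1-v)]$. Restricting to the $D-N$ nonhydrodynamic indices, where $a\leq\lambda_j\leq b$ by hypothesis, gives
\begin{equation}
\frac{a}{\gamma(1+v)}\leq i\omega'\leq\frac{b}{\gamma(1-v)}.
\end{equation}
The upper bound is already exactly \eqref{theboundoni}, while for the lower bound the elementary inequality $1/(1+v)\geq 1-v$ yields $a/[\gamma(1+v)]\geq a(1-v)/\gamma$, which is the claimed bound. (In fact the argument delivers the sharper, Doppler-symmetric window $a\sqrt{(1-v)/(1+v)}\leq i\omega'\leq b\sqrt{(1+v)/(1-v)}$.)
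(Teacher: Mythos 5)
Your proposal is correct, and it takes a genuinely different route from the paper's. The paper argues mode by mode: it takes an eigenvector of the boosted problem, contracts the eigenvalue equation with $\Psi^\dagger$ and with $\Phi_I^\dagger$ (its Eq.~\eqref{Nonhydromodes1}), bounds the denominator using $|\Psi^\dagger\E^1\Psi|\le\Psi^\dagger\E^0\Psi$, and bounds the numerator by projecting onto $(\ker\s)^\perp$ and invoking Bessel's inequality together with $\lVert\E^1\rVert\le 1$ (its Eq.~\eqref{patatuzza}). You instead compare the two Hermitian-definite pencils $(\s,\,-u_\mu\E^\mu)$ and $(\s,\,-u'_\mu\E^\mu)$ globally, via the operator ordering $\gamma(1-v)A\preceq A'\preceq\gamma(1+v)A$ and Courant--Fischer, so the kernel is handled by index matching and the projector/Bessel step never appears. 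This is not just cosmetic: the paper's two worst cases (denominator equal to $1+v$, i.e.\ $\Psi^\dagger\E^1\Psi=1$, and Bessel loss $\Psi^\dagger\Pb\Psi=1-v^2$) cannot occur simultaneously --- Cauchy--Schwarz equality in the first forces $\E^1\Psi=\Psi$, which via $\Phi_I^\dagger\Psi=-v\Phi_I^\dagger\E^1\Psi$ forces $\Phi_I^\dagger\Psi=0$ and hence $\Psi^\dagger\Pb\Psi=1$ --- and your coupled treatment removes exactly this slack. Consequently your lower bound $a/[\gamma(1+v)]=a\sqrt{(1-v)/(1+v)}$ is strictly sharper than the paper's $a(1-v)/\gamma$ (by a factor $\gamma^2$), and it is in fact optimal: it is saturated by the RTA continuum \eqref{RTAmo} at $w=-1$, and by the luminal relaxing modes $i\omega'_\pm=\lambda/[\gamma(1\pm v)]$ of the Supplementary Material's quasi-hydrodynamic example. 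This also explains the paper's remark that no theory is known to saturate its lower bound: none can, since the true spectral boundary is the Doppler-symmetric one you derive (and the same refinement propagates to Theorem~2, giving $a/[\gamma(1+vw)]$). What the paper's approach buys in exchange is generality in dimension: because it bounds each mode's Rayleigh quotient directly rather than matching ordered eigenvalues of finite pencils, it transfers essentially verbatim to infinite-dimensional settings with continuous spectrum, characterized through Weyl sequences --- which is precisely how the Supplementary Material proves the kinetic-theory version, where an index-by-index Courant--Fischer count is unavailable. Adapting your pencil comparison to that setting would require form-monotonicity versions of the min--max principle and additional functional-analytic care.
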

\begin{proof}
We carry out the analysis in the rest frame of $\mathcal{O}$, and choose a basis in $\Psi$--space such that $\E^0=1$ in such a frame. Then, the nonhydrodynamic modes relative to $\mathcal{O}$ fulfill the equation
\begin{equation}\label{eigenizzo}
i\omega \Psi =\s \Psi \qquad (\text{with }i\omega>0).    
\end{equation}
Multiplying both sides by $\Phi_I^\dagger$ ($I=1,...,N$), we find that $\Phi_I^\dagger \Psi \,{=}\, 0$. Hence, the nonhydrodynamic frequencies are the eigenvalues of $\s$ restricted to $(\ker \s)^{\perp}$.
A theorem of spectral theory \cite[Th. 2.19]{TeschlBook} states that the extremal values of $\Psi^\dagger \s \Psi/(\Psi^\dagger\Psi)$ are eigenvalues of $\s$, so that
\vspace{-0.1cm}
\begin{equation}\label{inebriante}
\begin{split}
\inf_{(\ker \s)^{\perp}} \frac{\Psi^\dagger \s \Psi}{\Psi^\dagger \Psi} =\inf_{(\ker \s)^{\perp}} \text{Spectrum}(\s) \geq a \, ,\\
\sup_{(\ker \s)^{\perp}} \frac{\Psi^\dagger \s \Psi}{\Psi^\dagger \Psi} =\sup_{(\ker \s)^{\perp}} \text{Spectrum}(\s) \leq b \, .
\end{split}
\end{equation}
Now, if we orient the axes so that the four-velocity of $\mathcal{O}'$ is
$u'^\mu =(\gamma,-\gamma v,0,0)$, then the nonhydrodynamic modes relative to $\mathcal{O}'$ fulfill the equation $i\gamma\omega' (1{+}v \E^1)\Psi = \s \Psi$ (with $\omega'>0$). Multiplying both side by either $\Psi^\dagger$ or $\Phi_I^\dagger$ ($I=1,...,N$), we obtain
\begin{equation}\label{Nonhydromodes1}
i\gamma\omega' = \frac{\Psi^\dagger \s \Psi}{1+v \Psi^\dagger  \E^1\Psi} \, , \quad \Phi_I^\dagger \Psi = -v\Phi_I^\dagger \E^1 \Psi \, ,
\end{equation}
where we have normalized $\Psi$ such that $\Psi^\dagger \Psi=1$. Since $\Psi^\dagger \E^\mu \Psi$ is timelike or null, we have that $|\Psi^\dagger \E^1\Psi|\leq \Psi^\dagger \E^0 \Psi=1$. Hence, the denominator in \eqref{Nonhydromodes1} is bounded between $1-v$ and $1+v$. To bound the numerator, we take $\Phi_I$ to be orthonormal, and define the matrix $\Pb=1-\sum_I \Phi_I \Phi_I^\dagger$, which is the orthogonal projector ($\Pb^2{=}\Pb^\dagger{=}\Pb$) onto $(\ker\s)^{\perp}$. Clearly, $\Psi^\dagger \s\Psi=\Psi^\dagger\Pb^\dagger \s \Pb\Psi$, so we can use \eqref{inebriante} to derive the following inequalities:
\begin{equation}
a\, \Psi^\dagger \Pb \Psi \leq \Psi^\dagger \s \Psi \leq b \, \Psi^\dagger \Pb \Psi \leq b\, . 
\end{equation}
The upper bound in \eqref{theboundoni} is immediately recovered. To obtain the lower bound, we invoke the definition of $\Pb$ and the second condition in \eqref{Nonhydromodes1}:
\begin{equation}\label{patatuzza}
\begin{split}
\Psi^\dagger \Pb \Psi ={}& 1{-}\sum_I |\Phi_I^\dagger \Psi|^2= 1{-}v^2\sum_I |\Phi_I^\dagger \E^1\Psi|^2 \\
\geq {}& 1-v^2 ||\E^1 \Psi ||^2  \geq 1{-}v^2 \, .\\ 
\end{split}   
\end{equation}
Here, we used Bessel's inequality and the spectral norm identity $||\E^1||\equiv \sup\limits_{||\Phi||=1} |\Phi^\dagger \E^1\Phi|\leq 1$ \cite[Eq. (2.65)]{TeschlBook}.
\end{proof}
Note that, to simplify the presentation, we have been working with a finite-dimensional $\Psi$ (see footnote \ref{footona1}). The argument, however, extends directly to the case where $\Psi$ lives in a Hilbert space, up to standard functional–analytic subtleties \cite{Bhatiabook,TeschlBook}. For illustration, the Supplementary Material provides a complete proof of Theorem \ref{theo1} within linearized kinetic theory.

{\it \noindent \textbf{Two quick examples --}} To test the bounds, we examine two representative models with $a=b=1$, in some units.  
Our first example is Cattaneo’s theory of diffusion \cite{cattaneo1958}, which in the rest frame takes the form
\begin{equation}
\begin{cases}
\partial_t T +w\partial_j q^j =0 \, ,\\
\partial_t q^j +w \partial^j T =-q^j \, , \\
\end{cases}
\end{equation}
and satisfies all of our hypotheses, provided $|w|\,{\leq}\,1$. In a boosted frame, this model features three non-hydrodynamic modes, with eigenfrequencies
\vspace{-0.2cm}
\begin{equation}\label{cattaniamo}
i\omega'_1 =\dfrac{1}{\gamma(1{-}v^2 w^2)} \, , \qquad i\omega'_{2,3}=\dfrac{1}{\gamma}\, .
\end{equation}
Our second example is kinetic theory in the Relaxation Time Approximation (RTA),
$p^\mu \partial_\mu \delta f =p^\mu u_\mu (\delta f-\delta f_{\text{eq}})$,
where $\delta f(x^\mu,p^\alpha)$ is the perturbed distribution function. In a boosted frame, this model exhibits a continuous family of non-hydrodynamic excitations of the form
\vspace{-0.2cm}
\begin{equation}\label{RTAmo}
i\omega'=\dfrac{1}{\gamma (1{+}vw)} \qquad (\text{with }w\,{\in}\, [-1,1]) \, .
\end{equation}
This splitting arises due to relativity of simultaneity: In the rest frame, particles travel different intervals $\Delta x\,{=}\,w\Delta t$ in the same amount of time $\Delta t \,{\sim}\, \tau$. When we boost, we have $\Delta t'\,{=}\,\gamma (\Delta t {+}v\Delta x)\,{\sim}\, \gamma (1{+}vw)\tau=1/(i\omega')$.

In figure \ref{fig:splitting}, we graph the modes of these two models against bounds \eqref{theboundoni}. RTA saturates the upper bound. We are not aware of a theory saturating the lower bound.

\begin{figure}[b!]
    \centering
\includegraphics[width=0.915\linewidth]{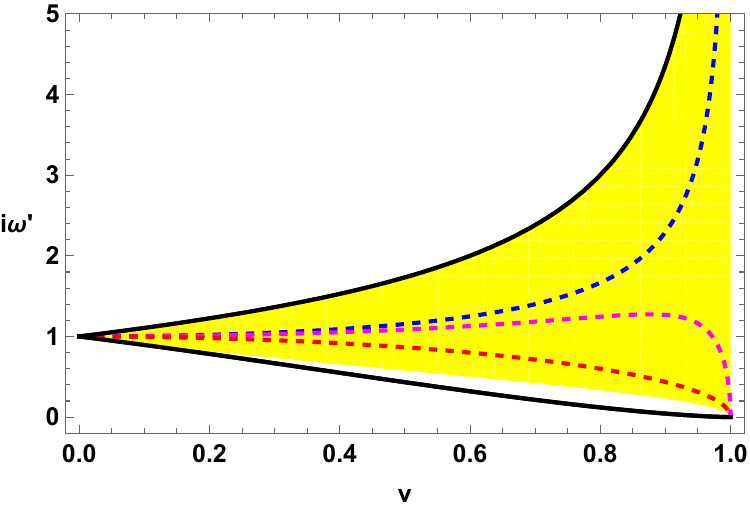}
\caption{Splitting of a single non-hydrodynamic frequency $i\omega\,{=}\,1$ into multiple frequencies $i\omega'$ induced by a boost of speed $v$. The black curves show the bounds \eqref{theboundoni}, with $a=b=1$, and they delimit the allowed region. The dashed curves refer to Cattaneo's eigenfrequency $i\omega'_1$, see equation \eqref{cattaniamo}, with $w=0$ (red), $0.9$ (magenta) and $1$ (blue); the red curve also captures the modes $i\omega'_{2,3}$. The yellow region covers the continuum of modes \eqref{RTAmo} of RTA.
}
    \label{fig:splitting}
\end{figure}

\newpage
{\it \noindent \textbf{Some applications --}} We now discuss some important consequences of Theorem~\ref{theo1}, starting with two corollaries that are particularly relevant for kinetic theory.

\begin{corollary}\label{cor1}
If the non-hydrodynamic spectrum possesses a fastest relaxation timescale at $k=0$ in one inertial frame, then the same is true in all inertial frames.
\end{corollary}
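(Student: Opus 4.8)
The plan is to translate the phrase ``fastest relaxation timescale'' into a statement about the supremum of the spectrum, and then read off the conclusion directly from Theorem~\ref{theo1}. A nonhydrodynamic mode with $i\omega=\sigma>0$ decays as $e^{-\sigma t}$, so its relaxation time is $\tau=1/\sigma$; the fastest relaxation is the one with the largest rate, i.e. the shortest $\tau$. Thus ``$\mathcal{O}$ possesses a fastest relaxation timescale'' is equivalent to the nonhydrodynamic spectrum at $k=0$ relative to $\mathcal{O}$ being bounded above, $b\equiv\sup(i\omega)<\infty$; were the rates instead unbounded, no single mode would relax fastest.

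With this reformulation in hand, the forward direction is immediate. I would set $a\equiv\inf(i\omega)\geq0$ and $b<\infty$ in the frame of $\mathcal{O}$, and apply the upper bound of \eqref{theboundoni}: the spectrum relative to $\mathcal{O}'$ then obeys $i\omega'\leq b/[\gamma(1-v)]$. Since the boost is subluminal ($v<1$), the factor $\gamma(1-v)$ is strictly positive, so the right-hand side is finite, and $\mathcal{O}'$ possesses a fastest relaxation timescale as well, equal to $\tau'_{\min}=\gamma(1-v)/b$.

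For the converse I would exploit the reciprocity of the construction: the inverse boost, from $\mathcal{O}'$ back to $\mathcal{O}$, carries the same relative speed $v$, so Theorem~\ref{theo1} applies verbatim with the roles of the two observers exchanged. It gives $b\leq b'/[\gamma(1-v)]$, whence a finite $b'$ forces a finite $b$; contrapositively, an unbounded spectrum for $\mathcal{O}$ produces an unbounded spectrum for $\mathcal{O}'$. Combining the two directions shows that boundedness of the spectrum --- and hence the existence of a fastest relaxation timescale --- is a genuinely frame-invariant property. Because any two inertial frames are related by a single boost, the property holds in all frames once it holds in one.

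The argument is essentially calculation-free, so the only place demanding care --- and the step I regard as the crux --- is the correct dictionary between ``fastest relaxation timescale'' and the \emph{upper} edge of the spectrum (not the lower one), together with the observation that subluminality, $v<1$, is exactly what keeps the transformed bound finite. In the genuinely infinite-dimensional (kinetic) setting one should further note that this ``upper edge'' may be the boundary of a continuum rather than an attained maximum; however, Theorem~\ref{theo1} constrains the entire spectral interval, so the conclusion carries over unchanged.
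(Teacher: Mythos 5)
Your proposal is correct and follows essentially the same route as the paper, which proves the corollary by setting $a=0$ and finite $b$ in Theorem~\ref{theo1} and noting that the transformed upper bound $b/[\gamma(1-v)]$ remains finite for any subluminal boost. Your additional ``converse'' step (boosting back from $\mathcal{O}'$ to $\mathcal{O}$) is harmless but redundant, since the one-directional application of Theorem~\ref{theo1} from the given frame to an arbitrary frame already yields the full statement.
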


\begin{corollary}\label{cor2}
If the non-hydrodynamic spectrum is gapped at $k=0$ in one inertial frame, then the same is true in all inertial frames.
\end{corollary}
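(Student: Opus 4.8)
The plan is to read off both corollaries directly from the two-sided bound \eqref{theboundoni}, with each corollary invoking only one of its two inequalities. First I would pin down the precise spectral meaning of the hypotheses. Since every nonhydrodynamic mode has $i\omega>0$, the quantity $i\omega$ is exactly the relaxation rate and its reciprocal the relaxation timescale. Possessing a \emph{fastest} relaxation timescale therefore means that the set of rates $\{i\omega\}$ admits a finite supremum $b<\infty$ (otherwise the timescales would accumulate at zero); being \emph{gapped} means that the rates are bounded away from the hydrodynamic value $\omega=0$, i.e. $\inf\{i\omega\}=a>0$.

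For Corollary~\ref{cor1}, I would take the frame in which the fastest timescale exists to be $\mathcal{O}$, so that $b<\infty$; the lower edge may be taken as $a=0$, which is admissible since $i\omega\ge 0$ always. Theorem~\ref{theo1} then gives, for any second observer $\mathcal{O}'$ moving at speed $v<1$,
\begin{equation}
i\omega' \leq \frac{b}{\gamma(1-v)} = b\sqrt{\frac{1+v}{1-v}} < \infty \, .
\end{equation}
Hence the boosted spectrum is again bounded above, so a fastest timescale persists for $\mathcal{O}'$. Since the boost speed and direction are arbitrary (using rotational covariance to orient the axes as in the theorem), this holds in every inertial frame.

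For Corollary~\ref{cor2}, I would take the gapped frame to be $\mathcal{O}$, so that $a>0$. The key observation is that the lower bound in the proof of Theorem~\ref{theo1} is derived using only the lower edge $a$ (through $a\,\Psi^\dagger\Pb\Psi\le\Psi^\dagger\s\Psi$ and $\Psi^\dagger\Pb\Psi\ge 1-v^2$), and hence remains valid even if the rest-frame spectrum is unbounded above. It yields
\begin{equation}
i\omega' \geq \frac{a(1-v)}{\gamma} = a\,(1-v)^{3/2}(1+v)^{1/2} > 0 \qquad (v<1)\, ,
\end{equation}
so the boosted nonhydrodynamic spectrum stays separated from zero, i.e. gapped, in all frames.

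The content of both corollaries is thus that the prefactors $b/[\gamma(1-v)]$ and $a(1-v)/\gamma$ remain finite and strictly positive for all $v\in[0,1)$: a boost can rescale the edges of the nonhydrodynamic spectrum, but can neither drive the fastest rate to infinity nor close an existing gap. The only genuinely delicate point is bookkeeping rather than analysis---one must invoke only the relevant half of \eqref{theboundoni} in each case, so that Corollary~\ref{cor2} does not spuriously inherit the finite-$b$ hypothesis that a gapped-but-unbounded spectrum would violate.
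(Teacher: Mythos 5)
Your proof is correct and takes essentially the same route as the paper: the paper obtains Corollary~\ref{cor2} by applying Theorem~\ref{theo1} with finite $a>0$ and $b=\infty$, so that the lower bound $i\omega'\ge a(1-v)/\gamma>0$ keeps the boosted spectrum gapped in every frame. Your additional check that the lower-bound half of the theorem's proof never invokes $b$ is precisely the bookkeeping that legitimizes setting $b=\infty$ when the rest-frame spectrum is gapped but unbounded above.
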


Both results follow directly from Theorem~\ref{theo1}, by choosing $a=0$ and finite $b$ in the first case, and finite $a$ with $b=\infty$ in the second.

Corollary~\ref{cor2} has an important implication. In \cite{GavassinoConvergence:2024xwf} it was shown that, in kinetic theory, a gap in the non-hydrodynamic spectrum guarantees that all hydrodynamic dispersion relations admit expansions as power series in $k$ with a finite radius of convergence.
The proof does not rely on the background being at rest, and therefore the result persists under boosts. Combining this with Corollary~\ref{cor2}, we obtain the following statement:

\begin{corollary}
If the non-hydrodynamic spectrum of a kinetic theory is gapped at $k=0$ in one inertial frame, then the hydrodynamic dispersion relations possess non-vanishing convergence radii in all inertial frames.
\end{corollary}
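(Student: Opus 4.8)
The plan is to obtain the statement as a two-step composition, with no new analytic machinery: first I would promote the gap from a single frame to all frames using Corollary~\ref{cor2}, and then feed the boosted gap into the convergence result of \cite{GavassinoConvergence:2024xwf}, applied frame by frame. All of the content lies in verifying that the hypotheses of the cited convergence theorem are met in every inertial frame; the boost itself is handled entirely by Theorem~\ref{theo1}.

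First I would fix the frame $\mathcal{O}$ in which the gap is assumed, so that the nonhydrodynamic spectrum at $k=0$ obeys $a\leq i\omega$ with $a>0$. Choosing finite $a$ and $b=\infty$ in Theorem~\ref{theo1}, the bound \eqref{theboundoni} gives, in any other frame $\mathcal{O}'$ moving at speed $v<1$, a boosted lower edge $i\omega'\geq a(1-v)/\gamma$. Since $a>0$ and $(1-v)/\gamma>0$ for all $v\in[0,1)$, this edge is strictly positive, i.e. the nonhydrodynamic spectrum of $\mathcal{O}'$ at $k'=0$ is again gapped — which is precisely Corollary~\ref{cor2}. At the same time, the hydrodynamic modes remain pinned at $\omega'=0$ at $k'=0$, because they are the $\ker(\s)$ modes and $\ker(\s)$ is a frame-independent subspace; hence in $\mathcal{O}'$ the hydrodynamic sector is still isolated from the nonhydrodynamic edge by a finite gap.

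Second, I would apply the convergence result of \cite{GavassinoConvergence:2024xwf} within the frame $\mathcal{O}'$ itself. As emphasized in the text, the proof of that result does not assume the background is at rest, so it can be invoked verbatim in $\mathcal{O}'$, where the fluid is boosted: a finite gap at $k'=0$ there guarantees that every hydrodynamic dispersion relation $\omega'(k')$ is analytic in a neighborhood of $k'=0$, with a strictly positive radius of convergence. Since $\mathcal{O}'$ was arbitrary, the conclusion holds in all inertial frames, which is the desired statement.

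The hard part will not be the algebra but the matching of hypotheses. I must ensure that the ``gap'' required as input to \cite{GavassinoConvergence:2024xwf} is exactly the $k=0$ separation between the hydrodynamic branch at the origin and the nonhydrodynamic edge, and that Corollary~\ref{cor2} delivers this same object — a strictly positive gap with the hydrodynamic modes still at the origin — in the boosted frame. The delicate claim is therefore that the convergence theorem is genuinely frame-agnostic: this rests on its proof invoking only spectral separation at fixed $k'$, never the rest-frame structure of the background, so that the persistence of the gap near $k'=0$ (which rules out any merging of the hydrodynamic and nonhydrodynamic sectors, or level crossings that would spoil analyticity) is all that its estimates require.
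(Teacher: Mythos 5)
Your proposal is correct and follows essentially the same route as the paper: it combines Corollary~\ref{cor2} (gap invariance, itself a consequence of Theorem~\ref{theo1} with $b=\infty$) with the convergence result of \cite{GavassinoConvergence:2024xwf}, applied frame by frame on the grounds that its proof never assumes the background is at rest. Your additional checks --- that the hydrodynamic modes stay pinned at $\omega'=0$ because $\ker(\s)$ is frame-independent, and that the gap fed into the convergence theorem is the same spectral separation delivered by Corollary~\ref{cor2} --- are sound elaborations of exactly the argument the paper gives.
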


This means that, if a genuine separation of scales exists, density-frame hydrodynamics \cite{ArmasDensityFrame2020mpr,Basar:2024qxd,BhambureDensityFrame2024axa} is well defined up to infinite order in derivatives in every inertial frame\footnote{Density-frame hydrodynamics posits that, in any inertial frame (not necessarily comoving), the conserved fluxes $J_I^k$ admit an expansion in powers of \emph{spatial} derivatives of the conserved densities $J_I^0$. Equivalently, that the hydrodynamic eigenfrequencies $\omega$ can be expanded in powers of the spatial wavevector $k^j$.}.

{\it \noindent \textbf{Convergence radius of boosted hydrodynamics --}} Consider a kinetic description of a system admitting a single conserved current, $\partial_\mu J^\mu = 0$. 
Assume that its nonhydrodynamic spectrum is gapped in the rest frame, with slowest 
relaxation rate $a$. In a frame boosted with velocity $v$, the corresponding spectral gap 
is $a' \geq a(1{-}v)/\gamma$. In that boosted frame, the unique hydrodynamic mode $\omega'(k')$ propagating along the 
$x$--direction solves
\begin{equation}
(\s + \E^1 ik')\,\Psi = i\omega'\Psi \, ,
\end{equation}
where we work on a basis for which $\E^0 = 1$. Upon complexifying $k'$, this equation 
defines an eigenvalue problem for the perturbed family $\s(ik') \equiv \s + ik'\E^1$.
Since both $\s$ and $\E^1$ are Hermitian (with $||\E^1||\leq 1$), and $0$ is a non-degenerate eigenvalue of $\s$, standard results from analytic perturbation 
theory apply \cite[\S II.3.5]{Kato_Perturbation_Theory}. In particular,
there exists a unique, analytic hydrodynamic mode $\{ \omega'(k'), \Psi(k') \}$ 
in a neighborhood of $k'=0$, whose radius of convergence $R'$ satisfies the lower bound \cite{Kato1949_Perturbation_I}
\begin{equation}
\boxed{R' \ge \frac{a'}{2\,||{\E^1}||} \ge \frac{a(1-v)}{2\gamma} \, .}
\end{equation}

\newpage
{\it \noindent \textbf{About time dilation --}} It is natural to expect that, as the velocity of the fluid approaches the speed of light, all nonhydrodynamic modes should effectively ``freeze'' due to time dilation. From this viewpoint, one would predict that the upper bound on $i\omega'$ should vanish as $v\!\to\!1$. Figure \ref{fig:splitting} shows that the opposite happens: the upper bound diverges. Where does our intuition fail?

The time-dilation argument implicitly assumes that signals are advected with the fluid element. Indeed, if, in the rest frame, a signal propagates from $(t,x)=(0,0)$ to $(t,x)=(\Delta t,0)$ before decaying, then in a frame where the medium moves at velocity $v$, the signal travels from $(t',x')=(0,0)$ to $(t',x')=(\gamma \Delta t,\gamma v \Delta t)$. Under this assumption, one finds that $i\omega'\sim 1/(\gamma\Delta t)$, which tends to zero as $v\to 1$, in line with expectations.
The problem appears if the signal has a nonzero rest-frame propagation speed $w$. If in the rest frame it moves from $(t,x)=(0,0)$ to $(t,x)=(\Delta t,- w \Delta t)$, then in the boosted frame it travels to
$(t',x')=(\gamma \Delta t- \gamma v w \Delta t,\, - \gamma w \Delta t+\gamma v \Delta t)$, which leads to $i\omega' \sim 1/[\gamma(1-vw)\Delta t]$.
Differentiating with respect to $v$ then shows that, for $v<w$, the boosted frequency \emph{increases} with $v$, in sharp contrast with the simple time-dilation picture. Only once $v>w$ does time dilation dominate, forcing the frequency downward. In the limiting case of a luminal signal ($w=1$), we recover $i\omega' \sim 1/[\gamma(1-v)\Delta t]$, which has the same $v$--dependence as the upper bound in \eqref{theboundoni}, and diverges as $v\rightarrow 1$

The argument above suggests that, if the medium supports signal propagation
only inside an ``acoustic cone'' narrower than the light cone, then
the boosted nonhydrodynamic spectrum should obey bounds sharper than
\eqref{theboundoni}. This is made precise in the theorem below.
\vspace{-0.1cm}
\begin{theorem}\label{theo2}
Consider a medium whose $k{=}0$ nonhydrodynamic spectrum lies within some interval $a \le i\omega \le b$ (with $a\ge 0$) in the rest frame. Assume further that signal propagation in the rest frame is bounded by a maximal speed $w\,{\le}\, 1$. Then, for an observer moving at speed $v\,{\geq}\,0$ relative to the medium, the corresponding $k'{=}0$ nonhydrodynamic frequencies are confined within the interval 
\vspace{-0.3cm}
\begin{equation}\label{theboundoni2} 
\boxed{\frac{a(1{-}vw)}{\gamma} \leq i\omega' \leq \frac{b}{\gamma(1{-}vw)}\, .} \end{equation} 
\end{theorem}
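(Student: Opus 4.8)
The plan is to recognize that the extra hypothesis---a maximal rest-frame signal speed $w$---is nothing but a sharpening of the causality bound on the spatial flux matrix, and then to rerun the argument of Theorem~\ref{theo1} with this improved input. First I would work again in the rest frame of $\mathcal{O}$, in a basis where $\E^0=1$. In such a basis the characteristic speeds of the hyperbolic system \eqref{EoM} along $x$ are the eigenvalues of $\E^1$, so the assumption that no signal outruns $w$ becomes the spectral-norm bound $||\E^1||\leq w$. This replaces the weaker inequality $||\E^1||\leq 1$ (equivalent to mere subluminality) that underpinned the proof of Theorem~\ref{theo1}.

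With that single substitution, the remainder is mechanical. The $k'=0$ nonhydrodynamic modes of $\mathcal{O}'$ still satisfy \eqref{Nonhydromodes1}, so for the upper bound the denominator is controlled via $|\Psi^\dagger\E^1\Psi|\leq||\E^1||\leq w$, giving $1+v\,\Psi^\dagger\E^1\Psi\geq 1-vw$; combined with $\Psi^\dagger\s\Psi\leq b$ from \eqref{inebriante}, this yields $i\omega'\leq b/[\gamma(1-vw)]$. For the lower bound I would repeat the projector estimate \eqref{patatuzza}: Bessel's inequality now gives $\Psi^\dagger\Pb\Psi\geq 1-v^2||\E^1\Psi||^2\geq 1-v^2w^2$, and together with $\Psi^\dagger\s\Psi\geq a\,\Psi^\dagger\Pb\Psi$ and $1+v\,\Psi^\dagger\E^1\Psi\leq 1+vw$ one finds $i\gamma\omega'\geq a(1-v^2w^2)/(1+vw)=a(1-vw)$, i.e.\ $i\omega'\geq a(1-vw)/\gamma$. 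Thus Theorem~\ref{theo2} follows from Theorem~\ref{theo1} by the single replacement $1\mapsto w$ in every appearance of the flux norm.

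The main obstacle is the first step: making rigorous the identification ``maximal signal speed $=||\E^1||$.'' One must verify that the largest characteristic speed---an eigenvalue of $(\E^0)^{-1}\E^1$, equivalently of $\E^1$ when $\E^0=1$---genuinely bounds the propagation speed of disturbances, which is the standard finite-propagation-speed theorem for symmetric hyperbolic systems. Once ``speed $w$'' is correctly encoded as $||\E^1||\leq w$, no further analytic work is required, and the two inequalities follow exactly as above.
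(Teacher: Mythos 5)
Your proposal is correct and follows essentially the same route as the paper: both encode the maximal signal speed via the characteristic-speed condition $\det(\E^1 - v_n\E^0)=0$ as the operator bound $||\E^1||\le w$ (in a basis with $\E^0=1$), and then rerun the proof of Theorem~\ref{theo1} verbatim, with the denominator in \eqref{Nonhydromodes1} now bounded between $1-vw$ and $1+vw$ and the Bessel estimate \eqref{patatuzza} sharpened to $1-v^2w^2$. Your explicit algebra for the lower bound, $a(1-v^2w^2)/(1+vw)=a(1-vw)$, is exactly the computation the paper leaves implicit.
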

\begin{proof}
The characteristic propagation speeds $v_n$ of signals traveling along the
$x$--direction are determined by the condition
$\det(\E^1 - v_n \E^0)=0$ \cite{Hishcock1983}. In a basis where $\E^0 = 1$,
the $v_n$ coincide with the eigenvalues of $\E^1$, so imposing that no signal
exceed the maximal speed $w$ is equivalent to the operator bound
$||\E^1|| \le w$. With this identification, the derivation of
Theorem~\ref{theo1} carries over unchanged when $\mathcal{O}$ is chosen to
comove with the medium. The only modification is that, in
Eq.~\eqref{Nonhydromodes1}, the denominator is now bounded between
$1 - v w$ and $1 + v w$. Likewise, in the final step of
\eqref{patatuzza}, we can replace $1 - v^2$ with $1 - v^2 w^2$.
\end{proof}
In figure \ref{fig:tighter}, we plot the bounds \eqref{theboundoni2} with $a=b=1$, for various values of $w$. In the Supplementary Material, we test such bounds with some randomly-generated models.

\begin{figure}[h!]
    \centering
\includegraphics[width=0.95\linewidth]{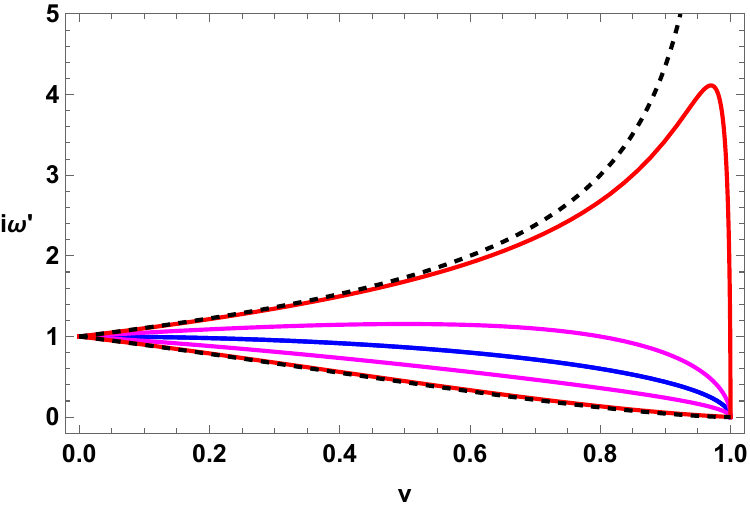}
\caption{Bounds \eqref{theboundoni2} on the nonhydrodynamic spectrum of a moving fluid, assuming a single rest-frame nonhydrodynamic frequency
$i\omega = 1$, and rest-frame signaling speed
$w=0$ (blue), $0.5$ (magenta), $0.97$ (red), and $1$ (black dashed). For
$w=0$, the admissible region collapses to the time--dilation line $i\omega' = 1/\gamma$,
while for $w=1$ the result reduces to \eqref{theboundoni}, which is the only
case in which $i\omega'$ may diverge as $v \to 1$.
}
    \label{fig:tighter}
\end{figure}

{\it \noindent \textbf{Non-relativistic media --}} In the regime where the microscopic dynamics are nonrelativistic in the rest frame (so that $w\!\ll\!1$), the usual time--dilation intuition is recovered: the bounds collapse to $a/\gamma \le i\omega' \le b/\gamma$. Indeed, in this limit, the boosted eigenvalue equation $i\gamma\omega'(1{+}v \E^1)\Psi = \s\Psi$ simplifies to $i\gamma\omega'\Psi = \s\Psi$, implying that $\gamma\omega'$ matches a corresponding rest--frame nonhydrodynamic frequency.

\newpage

{\it \noindent \textbf{Conclusions --}} We have set rigorous foundations for a relativistically covariant spectral theory of moving fluids, which applies to a wide variety of substances, including gases, solids, superfluids, and supersolids \cite{GavassinoUniveraalityI2023odx,GavassinoUniversalityII:2023qwl}. The resulting physics departs sharply from naive time-dilation expectations: In fully relativistic media, the equilibration timescale of some modes can become infinitely fast under large boosts (see figure \ref{fig:splitting}). This divergence, however, is avoided if the medium has a strictly subluminal signaling speed $w$ (see figure \ref{fig:tighter}). Moreover, the usual time-dilation intuition is recovered in substances that are non-relativistic in the rest frame, for which $w\ll 1$.

Our results point to several promising directions for future investigation. A particularly natural extension concerns the applicability of quasi-hydrodynamic \cite{StephanovHydroPlus:2017ghc,Grozdanov2019,BulkGavassino,GavassinoFarFromBulk:2023xkt,AhnBaggioli:2025odk} at large boosts. One can show that there exist systems where a parametrically slow mode is absorbed into the continuum past a critical velocity (see Supplementary Material), raising the question of which additional assumptions are required for quasi-hydrodynamics to exist in all reference frames. Another important question is whether the bounds derived here extend to electrically charged fluids or to holographic quasi-normal modes, neither of which obey the symmetry principle assumed in this work \cite{GavassinoNonHydro2022,Gavassino:2025tul}. It is tempting to conjecture that conclusions such as Corollaries~\ref{cor1} and~\ref{cor2} remain valid in these broader settings, but establishing them without symmetry constraints is likely to require substantially more sophisticated techniques.

\vspace{-0.4cm}
\section*{Acknowledgements}
\vspace{-0.4cm}

This work is supported by a MERAC Foundation prize grant,  an Isaac Newton Trust Grant, and funding from the Cambridge Centre for Theoretical Cosmology.

\bibliography{Biblio}

\newpage

\onecolumngrid
\newpage
\begin{center}
\textbf{\large How Lorentz boosts reshape relaxation spectra:\\Supplementary Material}\\[.2cm]
  L. Gavassino\\[.1cm]
  {\itshape Department of Applied Mathematics and Theoretical Physics, University of Cambridge, Wilberforce Road, Cambridge CB3 0WA, United Kingdom\\}
\end{center}

\setcounter{equation}{0}
\setcounter{figure}{0}
\setcounter{table}{0}
\setcounter{page}{1}
\renewcommand{\theequation}{S\arabic{equation}}
\renewcommand{\thefigure}{S\arabic{figure}}

\section*{Derivation of the symmetry conditions from the Onsager-Casimir principle}\label{AppAAA}
\vspace{-0.3cm}

Fix a reference frame, and \textit{define} the matrix $\E^0$ so that, in the Gaussian approximation \cite[\S111]{landau5}, the fluctuation probability takes the standard form $\mathcal{P}\propto e^{-E}$, with
\begin{equation}
E=\frac{1}{2}\int \Psi^{T}\E^0\Psi\, d^{3}x .
\end{equation}
For physical states, we take $\Psi$ real, allowing $\E^0$ to be chosen real, symmetric, and (by stability) positive definite. 

Next, consider a wavevector $\mathbf{k}$, and align the coordinate axes so that $\mathbf{k}=(k,0,0)$. The corresponding fluctuating Fourier mode can then be decomposed as $\Psi(t,x,y,z)=\Psi_s(t)\sin(kx)+\Psi_c(t)\cos(kx)$, and its probability distribution (for real $\Psi_s$ and $\Psi_c$) reads
\begin{equation}
\mathcal{P}(\Psi_s,\Psi_c) = \dfrac{e^{-\frac{V}{4}(\Psi^T_s \E^0 \Psi_s+\Psi^T_c \E^0 \Psi_c)}}{\int e^{-\frac{V}{4}(\Psi^T_s \E^0 \Psi_s+\Psi^T_c \E^0 \Psi_c)} d\Psi_s d\Psi_c} \, , 
\end{equation}
with $V$ denoting the volume of the system. The equal-time correlator matrix of this Fourier mode is therefore
\begin{equation}
\mathbb{Q} \equiv
\begin{bmatrix}
\langle \Psi_s \Psi_s^T \rangle & \langle \Psi_s \Psi_c^T \rangle \\
\langle \Psi_c \Psi_s^T \rangle & \langle \Psi_c \Psi_c^T \rangle \\
\end{bmatrix}
= \dfrac{2}{V}
\begin{bmatrix}
(\E^0)^{-1} & 0 \\
0 & (\E^0)^{-1}\\
\end{bmatrix} \, .
\end{equation}

Assume now that $\Psi$ evolves according to a first-order system $\partial_t \Psi\,{=}\,{-}(\mathbb{M}_0{+}\mathbb{M}_1 \partial_x)\Psi$, with $\mathbb{M}_0$ and $\mathbb{M}_1$ some constant background matrices. For the Fourier mode $\Psi\,{=}\,\Psi_s(t)\sin(kx){+}\Psi_c(t)\cos(kx)$, the dynamics reduce to
\begin{equation}
\dfrac{d}{dt}
\begin{bmatrix}
\Psi_s \\
\Psi_c \\
\end{bmatrix}
=-
\begin{bmatrix}
\mathbb{M}_0 & -k\mathbb{M}_1 \\
k\mathbb{M}_1 & \mathbb{M}_0 \\
\end{bmatrix}
\begin{bmatrix}
\Psi_s \\
\Psi_c \\
\end{bmatrix} \equiv - \mathbb{M} \begin{bmatrix}
\Psi_s \\
\Psi_c \\
\end{bmatrix} \, .
\end{equation}

The Onsager--Casimir principle states that, if there exists a discrete transformation $\epsilon:(\Psi_s,\Psi_c)\rightarrow (\epsilon_s\Psi_s,\epsilon_c\Psi_c)$ that contains time reversal (such as T, PT, or CPT) which (i) is a symmetry of the microscopic theory and (ii) leaves the equilibrium state invariant, then \cite{Krommes1993}
\begin{equation}\label{Kromm}
\mathbb{M}\mathbb{Q}=\epsilon (\mathbb{M}\mathbb{Q})^T \epsilon^T \, .
\end{equation}
For our analysis, PT is the natural discrete symmetry to impose. Unlike time reversal alone, which flips the momentum of a moving state, PT leaves both energy and momentum unchanged, and therefore remains a symmetry of (unmagnetized) equilibria even in boosted frames \cite{PeiCovariantFluctuations2025}. Although PT is violated by the weak interaction, CPT cannot be used in its place, since CPT reverses the sign of the chemical potential and thus does not preserve most equilibria. 

Under PT, the distribution transforms as $f(x,p)\rightarrow f(-x,p)$ \cite{Bender:2023cem,cavaglia2015nonsemilinear}. Consequently, if the degrees of freedom are the moments of $f$ (or some hydrodynamic fluxes), we always have that $\Psi_s$ is PT-odd and $\Psi_c$ is PT-even, leading to
\begin{equation}
\epsilon =
\begin{bmatrix}
-1 & 0 \\
0 & 1 \\
\end{bmatrix} \, ,
\end{equation}
with ``$1$'' denoting the identity matrix in $\Psi$--space. Substituting this into \eqref{Kromm} yields
\begin{equation}
\E^0 \mathbb{M}_0=(\E^0 \mathbb{M}_0)^T \equiv \s \, , \spc \E^0 \mathbb{M}_1=(\E^0 \mathbb{M}_1)^T \equiv \E^1 \, . 
\end{equation}
Multiplying the equations of motion $\partial_t\Psi=-(\mathbb{M}_0+\mathbb{M}_1\partial_x)\Psi$ by $\E^0$ then produces $(\E^0\partial_t+\E^1\partial_x)\Psi=-\s\Psi$, where all matrices are Hermitian (in fact real symmetric). The matrix $\E^0$ is positive definite in every reference frame. The final property to establish is the non-negativity of $\s$, which follows directly from stability considerations.

The above argument shows that the Hermitian structure of $\E^j$ and $\s$ stems from the fact that the relevant kinetic and transient variables are PT-even. This is often true, but not always. There exist systems whose fundamental degrees of freedom are PT-odd. Familiar examples include the electric field, which is odd under parity and even under time reversal, and the magnetic field, which exhibits the opposite pattern \cite[\S 6.10]{jackson_classical_1999}. A further instance is provided by the quasi-normal modes of holographic fluids. Here, the gravitational equations in the bulk of the holographic dual description are of second order in time, which cause the non-hydrodynamic degrees of freedom to always come in pairs $\{\phi,\partial_t \phi\}$ that acquire opposite signs under time reversal \cite{GavassinoNonHydro2022}.

\newpage
\section*{Proof of Theorem 1 within kinetic theory}

\subsection{Mathematical set up}

We focus on a non-degenerate gas of spinless particles, and write the kinetic distribution function as $f=f_{\text{eq}}(1+\psi)$, where $f_{\text{eq}}(p^j)$ is a uniform equilibrium state, and $\psi(x^\mu,p^j)$ is a linear perturbation. Working in the rest frame of $\mathcal{O}$ (who need not comove with $f_{\text{eq}}$), and assuming $\psi\propto e^{i\omega t-ik_j x^j}$, the linearized Boltzmann equation reads
\begin{equation}\label{Boltzmann}
\left(i\omega-w^j ik_j\right)\psi =I \psi \, , 
\end{equation}
where $I$ is the (Friedrichs extension \cite[\S 2.3]{TeschlBook} of the) linearized collision operator\footnote{In some references, one writes $I=-\frac{1}{p^0}L$ \cite{RochaBranchcut:2024cge,GavassinoGapless:2024rck}.}, and $w^j=p^j/p^0$. Working within the Hilbert space $\mathcal{H}=L^2(\mathbb{R}^3,f_{\text{eq}}d^3 p)$, with inner product
\begin{equation}
(\phi,\psi)=\int \dfrac{d^3 p}{(2\pi)^3} f_{\text{eq}} \phi^* \psi  \, ,
\end{equation}
and norm $||\psi||=\sqrt{(\psi,\psi)}$, and assuming a non-vanishing cross section, one finds the following properties \cite{GavassinoConvergence:2024xwf}:
\begin{equation}
\begin{split}
&I\psi=0 \text{ if and only if }\psi\in \text{span}(1,p^\nu) \, , \\
&(\phi,I\psi)^*=(\psi, I\phi) \, , \\
&(\psi,I\psi)\geq 0 \, , \\
&(\phi,w^j\psi)^*=(\psi,w^j\phi) \, , \\
&(\psi,w^j \psi)\in [-||\psi||^2,||\psi||^2] \, . \\
\end{split}
\end{equation}
Thus, the structure mirrors that of the main text, under the replacements $\Psi \rightarrow \psi$, $\E^0\rightarrow 1$, $\E^j \rightarrow w^j$, $\s \rightarrow I$, $\Phi^\dagger \Psi \rightarrow (\phi,\psi)$, and $\{\Phi_I\}\rightarrow \{1,p^0,p^1,p^2,p^3\}$. Indeed, if we decompose $\psi$ into an orthonormal basis, and denote the list of linear combination coefficients by $\Psi$, we recover precisely the formalism of the main text, with $D=\infty$. 

There is, however, one caveat. In an infinite-dimensional Hilbert space, the excitation spectrum generally includes both point and continuous components, and equation \eqref{Boltzmann} only captures the former. To characterize the full spectrum, we adopt the following broader definition \cite[Lecture 18, Th. 1]{Bhatiabook}:
\begin{definition}\label{def1}
A wavevector $k^\mu =(\omega,k^j)$ is an eigenmode of the theory if there exists a sequence $\{\psi_n\}_{n=1}^{\infty}$ of normalized states (i.e. $||\psi_n||=1$) such that
\begin{equation}\label{limitone}
||\left(I-i\omega+w^j ik_j\right)\psi_n|| \rightarrow 0 \, .  
\end{equation}
\end{definition}

\subsection{Conservation laws and non-hydrodynamic modes}

In linearized kinetic theory, the perturbations of the conserved currents associated with a state $\psi$ take the form
\begin{equation}
\delta J^\mu = \int \dfrac{d^3 p}{(2\pi)^3} f_{\text{eq}} \dfrac{p^\mu}{p^0} \phi =
\begin{bmatrix}
(1,\psi) \\
(w^1,\psi) \\
(w^2,\psi) \\
(w^3,\psi) \\
\end{bmatrix} \, ,
\qquad 
\delta T^{\mu\nu} = \int \dfrac{d^3 p}{(2\pi)^3} f_{\text{eq}} \dfrac{p^\mu}{p^0} p^\nu \phi =
\begin{bmatrix}
(p^\nu,\psi) \\
(w^1 p^\nu,\psi) \\
(w^2 p^\nu,\psi) \\
(w^3 p^\nu,\psi) \\
\end{bmatrix} \, ,
\end{equation}
which represent the particle-number current and the stress-energy tensor. For a generic collision cross section, these exhaust the set of conserved currents \cite[\S 2.4]{cercignani_book}. In line with the definition used in the main text, we therefore call an eigenmode $k'^\mu=\omega' u'^\mu$ non-hydrodynamic relative to an observer $\mathcal{O}'$ (with four-velocity $u'^\mu$) if the associated state $\psi$ (or the sequence of states $\psi_n$ of Definition~\ref{def1}) carries no conserved density, i.e.
\begin{equation}
u'_\mu \delta J^\mu =u'_\mu \delta T^{\mu \nu}=0 \, .
\end{equation}

There is, however, a key distinction from the finite-dimensional setting. In that case, $\omega'\neq 0$ is required, since $\omega'=0$ would imply $\Psi\in\ker(\s)$, meaning that the state is a genuine equilibrium. In the infinite-dimensional kinetic-theory case, it may happen that none of the $\psi_n$ lies in $\ker(I)$, while the limit \eqref{limitone} holds with $k^\mu=0$. In that case, $\omega'=0$ belongs to the continuous non-hydrodynamic spectrum, which is then said to be gapless \cite{GavassinoGapless:2024rck}. For our purposes, this subtlety is immaterial, since the theorem already allows the lower endpoint of the interval $a\leq i\omega \leq b$ to satisfy $a=0$.

\subsection{Bounds on the collision operator}
\vspace{-0.2cm}

From the definitions above, a frequency $\omega$ lies in the $k=0$ non-hydrodynamic spectrum relative to $\mathcal{O}$ if there exists a sequence of states $\psi_n$ satisfying
\begin{equation}
||\psi_n||=1 \, , \quad \qquad (1,\psi_n)=(p^\nu,\psi_n)=0 \, , \quad \qquad ||\left(I-i\omega\right)\psi_n|| \rightarrow 0 \, . 
\end{equation}
Equivalently, $i\omega$ belongs to the spectrum of $I$ restricted to the Hilbert subspace $\{1,p^\nu\}^{\perp}$. Using \cite[\S 2.4, Th.~2.19]{TeschlBook} together with the assumed bound $a\leq i\omega\leq b$, we obtain the variational inequalities
\begin{equation}\label{bounduzzo}
\begin{split}
& \inf_{\{1,p^\nu\}^{\perp}} \dfrac{(\psi,I\psi)}{(\psi,\psi)} = \inf_{\{1,p^\nu\}^{\perp}} \text{Spectrum}(I)\geq a \, , \\
& \sup_{\{1,p^\nu\}^{\perp}} \dfrac{(\psi,I\psi)}{(\psi,\psi)} = \sup_{\{1,p^\nu\}^{\perp}} \text{Spectrum}(I)\leq b \, ,\\
\end{split}
\end{equation}
which directly parallel the bounds on $\Psi^\dagger \s\Psi/(\Psi^\dagger \Psi)$ established in the main text.

\vspace{-0.2cm}
\subsection{Bounds on boosted spectra}
\vspace{-0.2cm}

We are now ready to derive bounds on the $k'=0$ non-hydrodynamic spectrum relative to an observer $\mathcal{O}'$ that moves with speed $v$ relative to $\mathcal{O}$. The algebra is essentially identical to that of the main text, modulo some technicalities.

As usual, we orient the axes such that the fourvelocity of $\mathcal{O}'$ is $u'^\mu=(\gamma,-\gamma v,0,0)$. Then, $\omega'$ is a non-hydrodynamic frequency provided that there exists a sequence of states $\psi_n$ with the following properties:
\begin{equation}\label{gringooogrigoo}
||\psi_n||=1 \, , \quad \qquad (1{+}vw^1,\psi_n)=([1{+}vw^1] p^\nu,\psi_n)=0 \, , \quad \qquad ||\left[I-i\gamma\omega'(1{+}vw^1)\right]\psi_n|| \rightarrow 0 \, . 
\end{equation}
Using the Cauchy–Schwarz inequality, we have that
\begin{equation}
|(\psi_n,\left[I-i\gamma\omega'(1{+}vw^1)\right]\psi_n)|\leq ||\left[I-i\gamma\omega'(1{+}vw^1)\right]\psi_n|| \rightarrow 0 \, ,
\end{equation}
which implies
\begin{equation}
[1-v(\psi_n,w^1\psi_n)]\bigg[\dfrac{(\psi_n,I\psi_n)}{1-v(\psi_n,w^1\psi_n)}-i\gamma \omega'\bigg]\rightarrow 0\, .
\end{equation}
But since $|(\psi_n,w^1\psi_n)|\leq ||\psi_n||^2=1$, the first square bracket is larger $1-v$, and thus cannot tend zero. Hence, the second square bracket must approach zero, giving
\begin{equation}
\dfrac{(\psi_n,I\psi_n)}{1-v(\psi_n,w^1\psi_n)}\rightarrow i\gamma \omega'   \, .
\end{equation}
Thus, if we can prove that each number in this sequence is bounded between $a(1{-}v)$ and $b/(1{-}v)$, we are done.

Let $\{\phi_I\}_{I=0}^4$ be an orthonormal basis of $\text{span}(1,p^\nu)$, and define $\Bar{\psi}_n=\psi_n-\sum_I (\phi_I,\psi_n)\phi_I$. Then, we have that
\begin{equation}
\Bar{\psi}_n \in \{1,p^\nu\}^{\perp} \, , \quad\qquad (\Bar{\psi}_n,I\Bar{\psi}_n)=(\psi_n,I\psi_n) \, , \quad\qquad (\Bar{\psi}_n,\Bar{\psi}_n)=1-\sum\nolimits_I |(\phi_I,\psi_n)|^2 \, , 
\end{equation}
and the upper bound is immediately proven:
\begin{equation}
\dfrac{(\psi_n,I\psi_n)}{1-v(\psi_n,w^1\psi_n)} =\dfrac{(\Bar{\psi}_n,I\Bar{\psi}_n)}{1-v(\psi_n,w^1\psi_n)} \leq \dfrac{b (\Bar{\psi}_n,\Bar{\psi}_n)}{1-v} \leq \dfrac{b}{1-v}\, . 
\end{equation}
To obtain the lower bound, we also need the second condition of \eqref{gringooogrigoo}, namely $(\phi_I,\psi_n)=-v(\phi_I,w^1\psi_n)$, so that
\begin{equation}
\begin{split}
&\dfrac{(\psi_n,I\psi_n)}{1-v(\psi_n,w^1\psi_n)}= \dfrac{(\Bar{\psi}_n,I\Bar{\psi}_n)}{1-v(\psi_n,w^1\psi_n)} \geq \dfrac{a(\Bar{\psi}_n,\Bar{\psi}_n)}{1+v}=\dfrac{a}{1+v} \left[1- \sum_I |(\phi_I,\psi_n)|^2\right] \\
& = \dfrac{a}{1+v} \left[1-v^2 \sum_I |(\phi_I,w^1\psi_n)|^2\right] \geq \dfrac{a}{1+v} \left[1-v^2 ||w^1 \psi_n||^2\right] \geq \dfrac{a}{1+v} \left[1-v^2 \right] =a(1-v) \, . \\
\end{split}
\end{equation}

\vspace{-0.2cm}
\subsection{Additional remarks}
\vspace{-0.2cm}

Our proof above can be safely generalized to a broader class of kinetic theories. For example, one can relax the assumption that the particle number is conserved. In that case, the proof is essentially identical: we just need to remove $\delta J^\mu$ from the conserved currents, and $1$ from the kernel of $I$. We can also allow for degenerate statistics, in which case one should replace $f_{\text{eq}} d^3 p$ with $g_s f_{\text{eq}}(1\pm f_{\text{eq}}) d^3p$ in the inner products, where $g_s$ is the spin degeneracy.

\newpage
\section*{Boosted non-hydrodynamic spectrum of randomly-generated models}

\vspace{-0.2cm}
\subsubsection{Without rotational symmetry in the rest frame}
\vspace{-0.2cm}

We generate random models according to the following procedure. Since one may always choose a basis in $\Psi$--space such that $\E^0\,{=}\,1$ and $\sigma\,{=}\,\mathrm{diag}(\sigma_n)$, we impose these conditions from the outset. The non-zero diagonal entries $\sigma_n$ are then sampled as independent random variables uniformly distributed between $0$ and $1$ (in natural units).
To construct $\E^1$, we generate a random matrix $\mathbb{M}$ with entries uniformly distributed in the interval $[-1,1]$, and set $\E^1 \,{=}\, w\,(\mathbb{M}{+}\mathbb{M}^T)/\lVert \mathbb{M}{+}\mathbb{M}^T\rVert$. This normalization ensures that $\lVert \E^1\rVert = w$.

In figure \ref{fig:Randomuzko}, we plot the non-hydrodynamic frequencies $i\omega'$ (evaluated at $k'=0$) as functions of the boost velocity $v$ (with sign), for models with increasing $w$. In all models considered, we take $\Psi$ to be five-dimensional and assume the presence of a single conservation law, resulting in four non-hydrodynamic modes. As is apparent, the spectrum admits a broad range of possible transformation behaviors, since different modes may approach, merge, and cross as $v$ is increased (where the likelihood of close encounters increases with $w$). Moreover, since $\E^1$ is taken to be completely random, the medium is not isotropic in the rest frame, so the spectrum is not invariant under $v\to -v$.

\begin{figure}[b!]
    \centering
\includegraphics[width=0.44\linewidth]{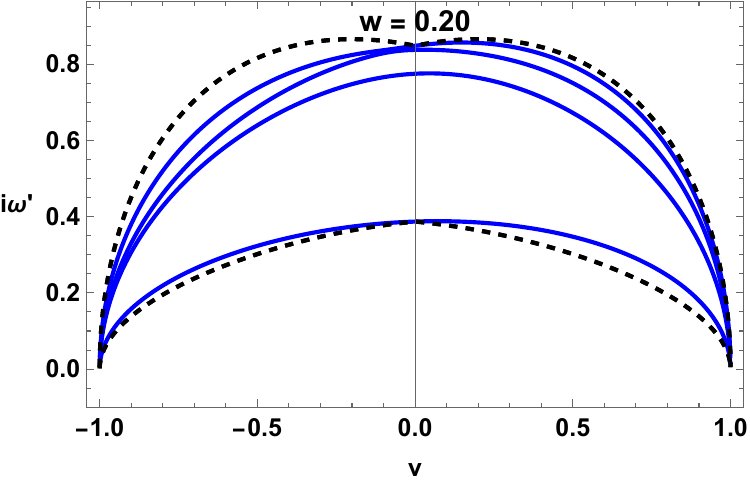}
\includegraphics[width=0.44\linewidth]{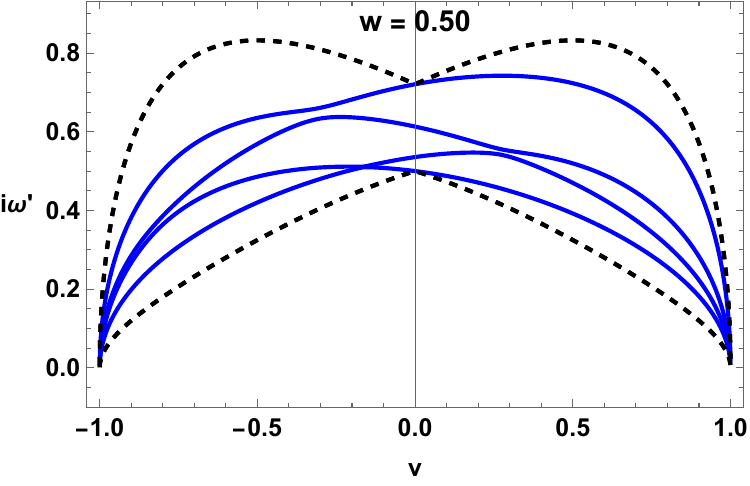}
\includegraphics[width=0.44\linewidth]{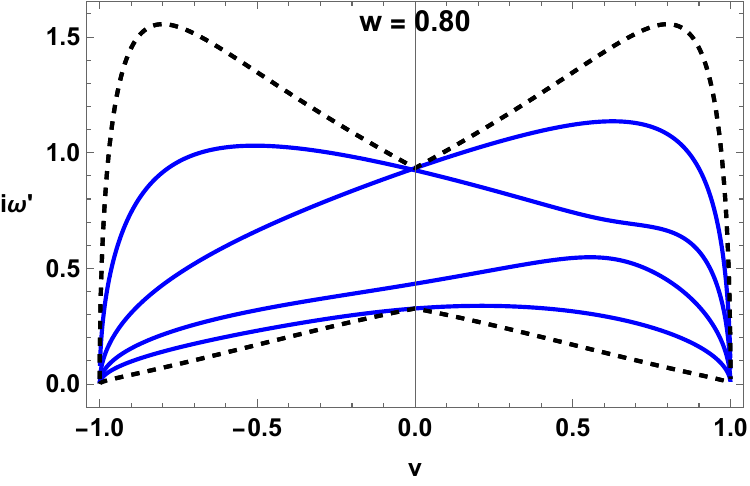}
\includegraphics[width=0.44\linewidth]{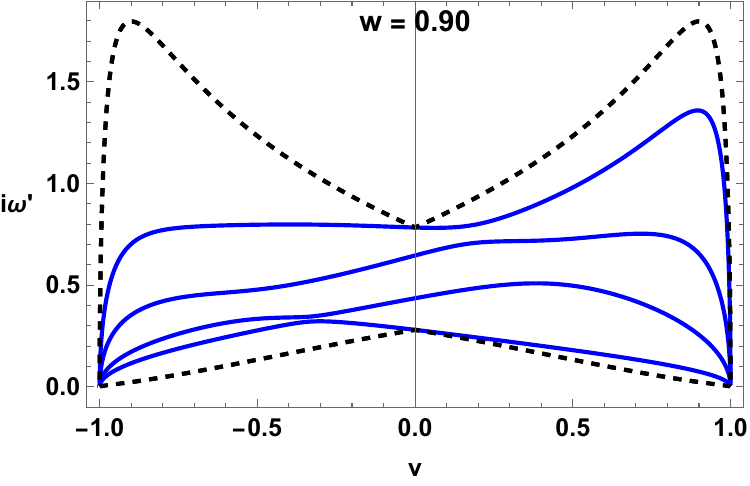}
\includegraphics[width=0.44\linewidth]{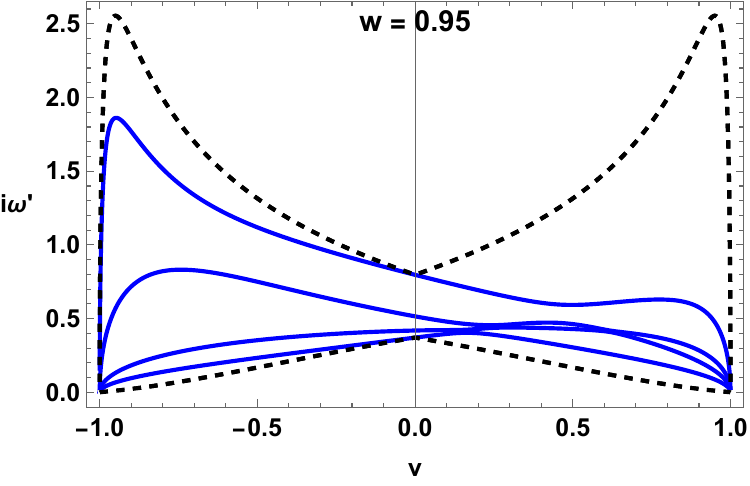}
\includegraphics[width=0.44\linewidth]{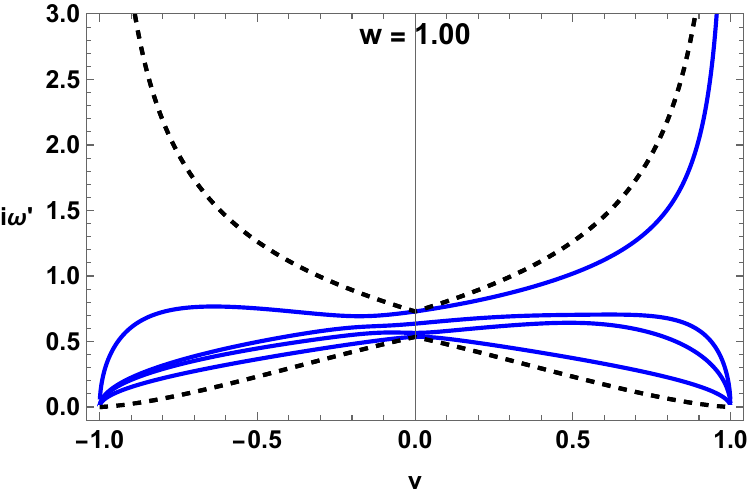}
\caption{Nonhydrodynamic modes at zero wavenumber for a medium moving along the $x$ direction with velocity $v$. The blue curves show the individual relaxation rates $i\omega'$ as functions of $v$, while the dashed curves delimit the region permitted by Theorem~2. Each panel corresponds to a distinct randomly generated model with $D\,{=}\,5$, $N\,{=}\,1$, and the indicated value of $w$.}
    \label{fig:Randomuzko}
\end{figure}

\newpage
\subsubsection{With rotational symmetry in the rest frame}

Assume that the medium is isotropic in its rest frame. Then, if $\Psi(t,x^1)$ is a solution of the equation of motion $(\partial_t+\E^1\partial_1+\s)\Psi=0$, it follows that $\mathbb{L}\Psi(t,-x^1)$ is also a solution, where $\mathbb{L}$ is a $D\times D$ matrix implementing a $180^\circ$ rotation about the $x^3$ axis in $\Psi$--space (with $\mathbb{L}^2=1$). This symmetry implies the matrix relations $\E^1\mathbb{L}=-\mathbb{L}\E^1$ and $\s\mathbb{L}=\mathbb{L}\s$. Recalling that $\s$ is diagonal, and assuming that three variables are $\mathbb{L}$--even  (one of which is conserved) and two are $\mathbb{L}$--odd, the matrices take the form
\begin{equation}\label{isotropizziamo}
\mathbb{L}=
\begin{bmatrix}
1 & 0 & 0 & 0 & 0 \\
0 & 1 & 0 & 0 & 0 \\
0 & 0 & 1 & 0 & 0 \\
0 & 0 & 0 & -1 & 0 \\
0 & 0 & 0 & 0 & -1
\end{bmatrix},
\qquad
\E^1=
\begin{bmatrix}
0 & 0 & 0 & e_{41} & e_{51} \\
0 & 0 & 0 & e_{42} & e_{52} \\
0 & 0 & 0 & e_{43} & e_{53} \\
e_{41} & e_{42} & e_{43} & 0 & 0 \\
e_{51} & e_{52} & e_{53} & 0 & 0
\end{bmatrix},
\qquad
\s=
\begin{bmatrix}
0 & 0 & 0 & 0 & 0 \\
0 & \sigma_1 & 0 & 0 & 0 \\
0 & 0 & \sigma_2 & 0 & 0 \\
0 & 0 & 0 & \sigma_3 & 0 \\
0 & 0 & 0 & 0 & \sigma_4
\end{bmatrix},
\end{equation}
where $e_{mn}$ and $\sigma_n$ are random variables. The resulting boosted spectra are shown in figure \ref{fig:Randomuzko2}. We see that, in this case, the spectra are perfectly symmetric under the transformation $v\to -v$, as expected when an isotropic medium is set into uniform motion.

In the upper panels, all coefficients $\sigma_n$ are chosen as \textit{independent} random variables. Consequently, there are four distinct nonhydrodynamic frequencies at $v=0$. Since the frequencies $\omega'$ depend smoothly on $v$ and the spectrum is even under $v\to -v$, all blue curves necessarily have vanishing slope at $v=0$. As a result, these models cannot approach the bounds tangentially at small $v$, because the bounds themselves have finite slope at the origin.

In the lower panels, we instead impose the constraints $\sigma_1=\sigma_3$ and $\sigma_2=\sigma_4$, so that the nonhydrodynamic spectrum consists of two modes that are twofold degenerate at $v=0$. When $v$ is turned on, each degenerate pair can split into branches with opposite slopes, thereby preserving the even symmetry of the spectrum. This allows the modes to adhere to the bounds in the small-$v$ regime.

\begin{figure}[b!]
    \centering
\includegraphics[width=0.44\linewidth]{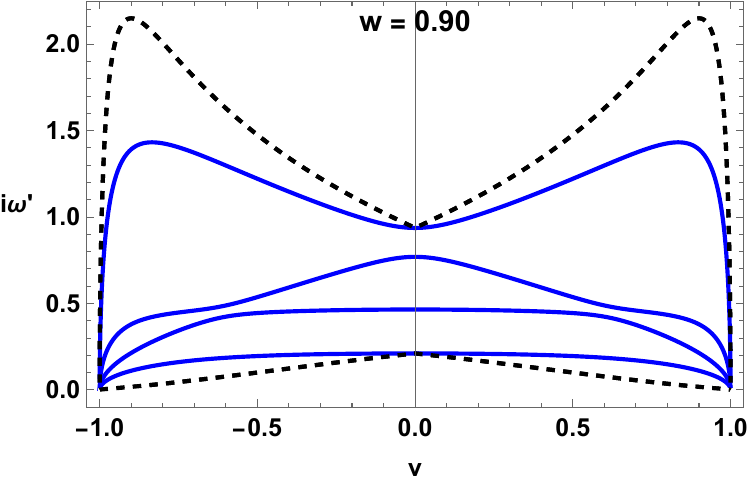}
\includegraphics[width=0.44\linewidth]{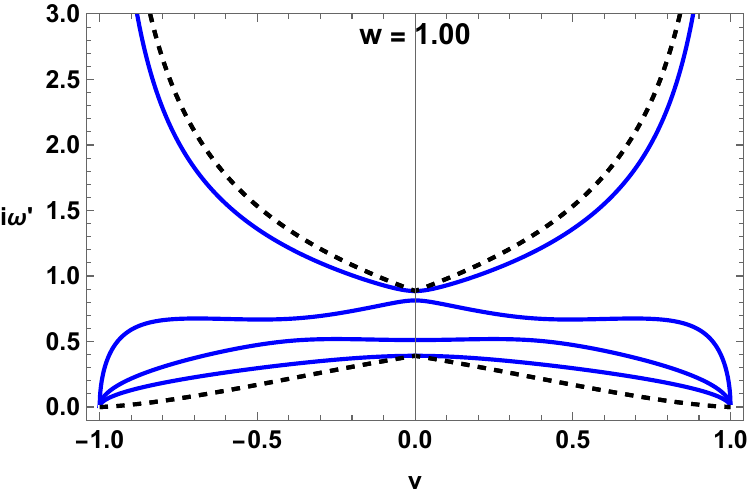}
\includegraphics[width=0.44\linewidth]{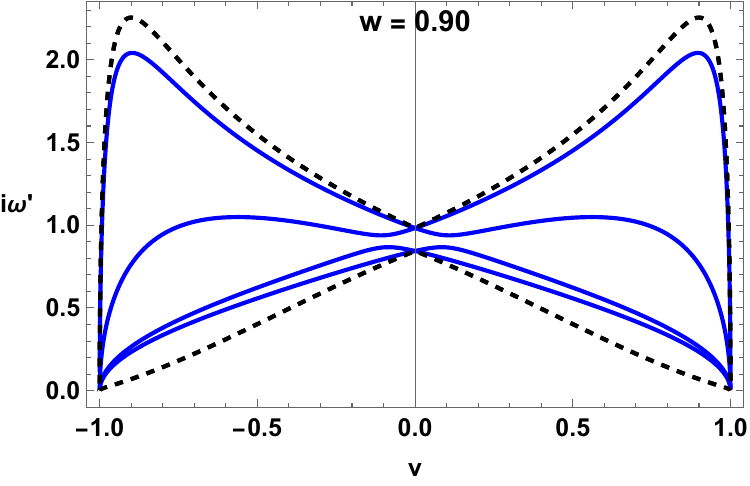}
\includegraphics[width=0.44\linewidth]{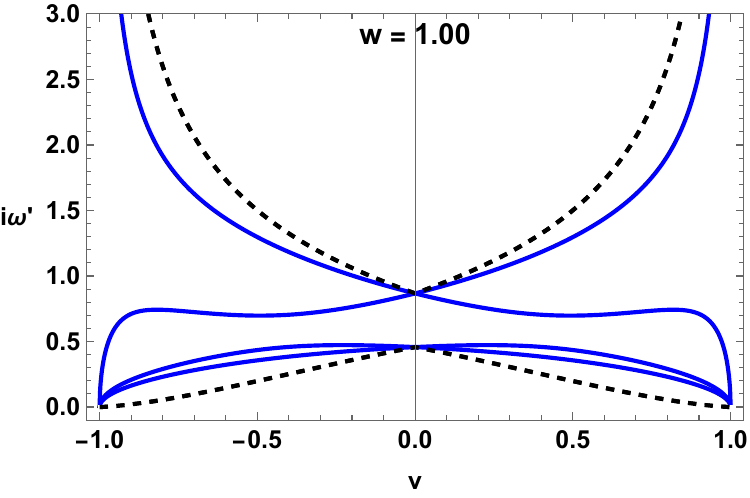}
\caption{Nonhydrodynamic modes at zero wavenumber for a medium moving along the $x$ direction with velocity $v$. The blue curves display the individual relaxation rates $i\omega'$ as functions of $v$, while the dashed curves indicate the region allowed by Theorem~2. Each panel corresponds to a distinct randomly generated model of the form \eqref{isotropizziamo}. In the upper panel, the coefficients $\sigma_n$ are chosen as completely independent random variables. In the lower panel, the $\sigma_n$ are random variables subject to the constraints $\sigma_1=\sigma_3$ and $\sigma_2=\sigma_4$.}
    \label{fig:Randomuzko2}
\end{figure}

\section*{Breakdown of quasi-hydrodynamics at large boosts}

We consider a rest-frame model of the form $(\partial_t+\E^1\partial_x+\s)\Psi=0$, with $\Psi\in\mathbb{C}^D$, whose maximal signal velocity is, say, $w\,{=}\,1/2$, and whose rest-frame non-hydrodynamic spectrum exhibits a gap $a=1$. According to Theorem~2, after a boost with velocity $v$, the non-hydrodynamic frequencies retain a minimal gap $a'=(1-|v|/2)/\gamma$.

We now introduce an extended model $\big[\partial_t+\E^1_{(+2)}\partial_x+\s_{(+2)}\big]\Psi_{(+2)}=0$, with $\Psi_{(+2)}\in\mathbb{C}^{D+2}$, defined by
\begin{equation}\label{quasuzzo}
\E^1_{(+2)} = \left[\begin{array}{c|cc}
\E^1 & 0 & 0 \\
\hline
0 & 1 & 0 \\
0 & 0 & -1
\end{array}\right]\,, 
\qquad\qquad 
\s_{(+2)} = \left[\begin{array}{c|cc}
\s & 0 & 0 \\
\hline
0 & \lambda & 0 \\
0 & 0 & \lambda
\end{array}\right]\qquad\qquad (\text{with }0<\lambda\ll 1)\, .
\end{equation}
This extension amounts to adding two non-hydrodynamic degrees of freedom that propagate at the speed of light and decay parametrically slowly, with rate $\lambda$, in the rest frame.
In a frame moving with velocity $v$, the non-hydrodynamic spectrum of this extended model coincides with that of the original system, supplemented by two additional modes,
\begin{equation}
i\omega'_{\pm}=\dfrac{\lambda}{\gamma(1\pm v)} \, .
\end{equation}
These modes lie in the quasi-hydrodynamic regime as long as they remain well separated from the rest of the spectrum. This separation breaks down when $|v|\gtrsim 1-2\lambda$. Therefore, regardless of how small $\lambda$ is, there always exists an observer $\mathcal{O}'$ for whom no quasi-hydrodynamic spectral separation is present at $k'=0$. In the limit $\lambda\to0$, such an observer must move increasingly close to the speed of light.

\begin{figure}[h!]
    \centering
\includegraphics[width=0.5\linewidth]{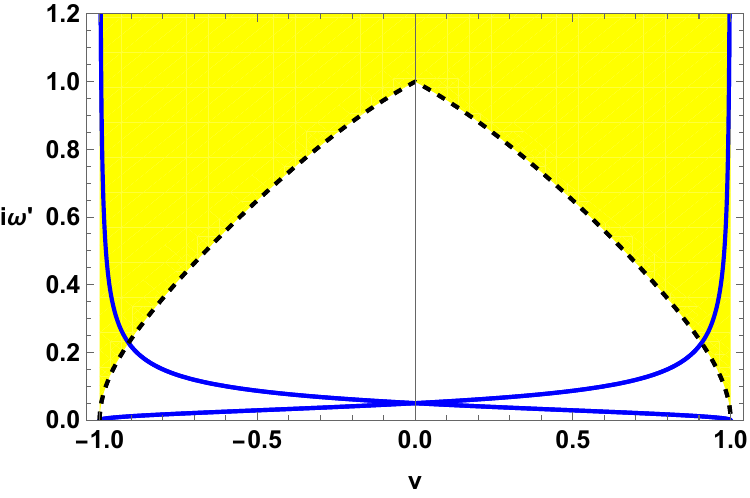}
\caption{Non-hydrodynamic spectrum of model \eqref{quasuzzo}. A large population of modes has minimal relaxation rate $1$, and maximal propagation speed $1/2$. At finite boosts, these modes populate the yellow region, bounded from below by the curve $(1-|v|/2)/\gamma$ (dashed). In addition, there are two fully decoupled quasi-hydrodynamic modes, with rest-frame relaxation rates $i\omega_{\pm}=\lambda\ll 1$, and which propagate exactly at the speed of light. At sufficiently high boosts, one of these modes experiences an arbitrarily large blueshift, and thus decays faster than the rest of the spectrum.}
    \label{fig:quasiHydro}
\end{figure}

\label{lastpage}
\end{document}